\newcommand{\name}{NerualMLS}
\newtheorem{prop}{Proposition}
\title{NerualMLS: Geometry-Aware Control Point Deformation
}
\author{
  Meitar Shechter$^{1}$, Rana Hanocka$^{2}$, Gal Metzer$^{1}$, Raja Giryes$^{1}$, Daniel Cohen-Or$^{1}$\\ \\
  $^1$Tel-Aviv University, Israel, 
  $^2$University of Chicago, USA \\
}
\begin{document}
\twocolumn[

\maketitle
\hrule
\bigskip
\begin{abstract}
We introduce \name, a space-based deformation technique, guided by a set of displaced control points. 
We leverage the power of neural networks to inject the underlying shape geometry into the deformation parameters.
The goal of our technique is to enable a realistic and intuitive shape deformation.
Our method is built upon moving least-squares (MLS), since it minimizes a weighted sum of the given control point displacements. 
Traditionally, the influence of each control point on every point in space (\emph{i.e.,} the weighting function) is defined using inverse distance heuristics. In this work, we opt to \textit{learn} the weighting function, by training a neural network on the control points from a single input shape, and exploit the innate smoothness of neural networks.
Our geometry-aware control point deformation is agnostic to the surface representation and quality; it can be applied to point clouds or meshes, including non-manifold and disconnected surface soups.
We show that our technique facilitates intuitive piecewise smooth deformations, which are well suited for manufactured objects. 
We show the advantages of our approach compared to existing surface and space-based deformation techniques, both quantitatively and qualitatively.
\end{abstract}
\bigskip
\hrule
\bigskip


\bigskip
]
\begin{figure}
    \centering
    \includegraphics[width= 0.7\columnwidth]{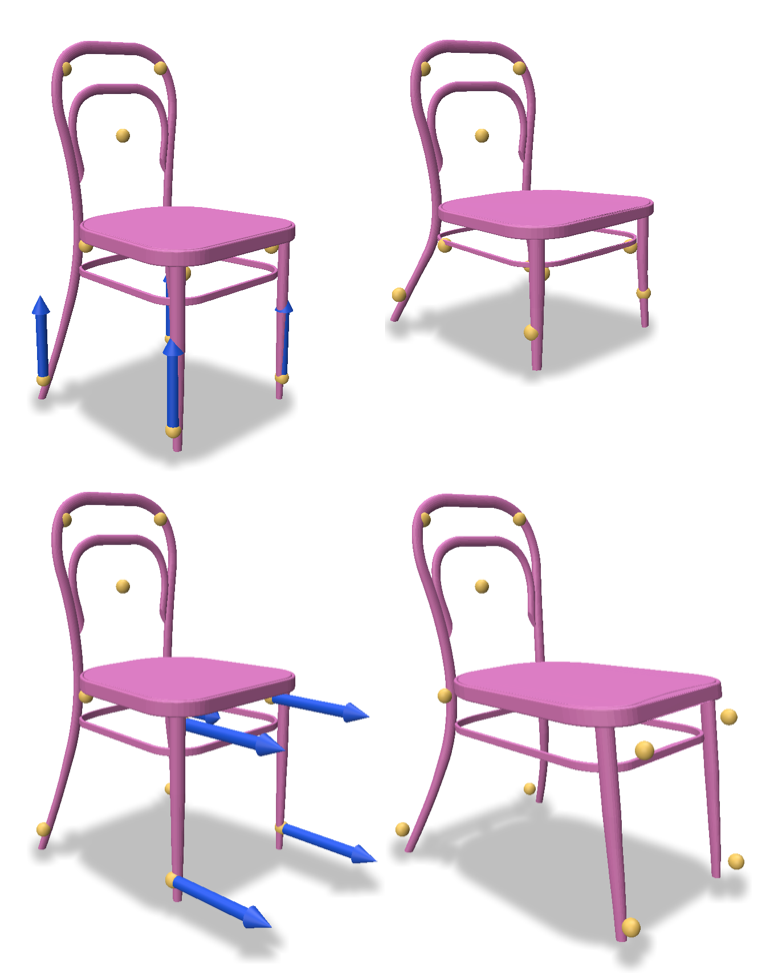} 
    \caption{\emph{Interactive shape deformation using \name.} We train a neural network on a given set of control points (yellow), in a displacement-agnostic manner. Then, the user applies various displacements and receives different deformations.}
\label{fig:teaser}
\end{figure}

\section{Introduction}
Interactive deformation of 3D shapes is a long-standing problem in computer graphics \cite{10.2312:egt.20091068, yuan2021revisit}. 
The objective is to provide the user a simple and intuitive interface for semantic shape editing and manipulation.
The surge of successful use of neural networks on images~\cite{dosovitskiy2021image, mao2021voxel, dai2021coatnet} and text~\cite{vaswani2017attention, brown2020language, radford2021learning}, has garnered interest in applying them to the field of geometric modeling. Yet, exploiting neural networks to inject semantic understanding into shape deformation is only in its infancy. 
The work presented in this paper is a step forward along that direction.

We present a space-deformation technique, where the user manipulates a set of control points which are interactively displaced to define the deformation. The technique is built upon moving least-squares (MLS), as it minimizes a weighted sum of the given control point displacements. 
However, in traditional MLS the weighting function is crafted based on pre-defined heuristics, which are agnostic to the given input shape. Instead, we opt to \textit{learn} a weighting function, which is \textit{tailored} to the given input shape using a neural network, as illustrated in Figure~\ref{fig:2d_illustration}, leading to a \name.

Given a set of control points, we train a network to learn the corresponding (control point) weighting function. We assign each control point a unique \emph{classification label}, and train a network to learn to map each control point position to the assigned 1-hot encoding. During inference, we obtain the \textit{influence} of each control point for an arbitrary point in space through a feed-forward pass of the network. This yields a set of network-predicted probabilities for the input point to belong to each control point. 

Our technique benefits from the innate inductive biases of neural networks (e.g., they tend to learn the low frequencies first \cite{pmlr-v97-rahaman19a, basri2020frequency, Bietti2019OnTI}, and intrinsically favor smooth and desirable solutions~\cite{ulyanov2018deep, 10.1145/3386569.3392415, metzer2020self}). 
The inductive bias of neural networks enables us to train the network to learn a smooth weighting function from only the given control points, and avoids having to train on a large dataset.
Furthermore, our technique benefits from the tendency of neural networks to converge to a piecewise smooth solution~\cite{Savarese19How,Williams19Gradient,Ongie2020Function,giryes2020function}.
Piecewise smooth deformations are desirable when editing and manipulating manufactured objects. It is particularly useful for keypoint-based deformations, since we want to deform or displace each part separately while smoothly interpolating the geometric details within each part independently.

The interface of our method and a set of shape manipulations are displayed in Figure~\ref{fig:overview}. Our method:
(i) provides an intuitive and simple framework for shape manipulation; 
(ii) requires no training dataset; 
(iii) produces piecewise smooth deformations; and 
(iv) provides control over the relaxation degree of the interpolation property.
We demonstrate the advantages of our approach compared to existing surface and space-based deformation techniques, both quantitatively and qualitatively. Our code is available at \url{https://github.com/MeitarShechter/NeuralMLS}.

\begin{figure}
    \centering
    \newcommand{\pl}{34}
    \begin{overpic}[width=\columnwidth]{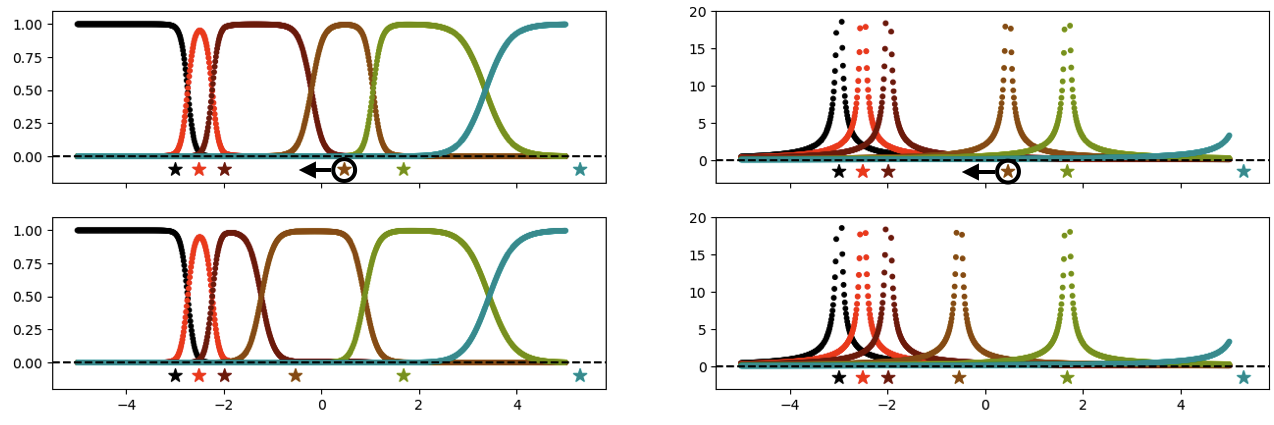}
    \put(15,  \pl){\textcolor{black}{\name}}
    \put(69,  \pl){\textcolor{black}{MLS\cite{Zhu20073DDU}}}
    \put(100, 17){\rotatebox{270}{\small Config 2}}
    \put(100, 33){\rotatebox{270}{\small Config 1}} 
    \put(-3, 4){\rotatebox{90}{\textcolor{gray}{\footnotesize weight}}}
    \put(-3, 21){\rotatebox{90}{\footnotesize \textcolor{gray}{weight}}} 
    \put(15,  -3){\textcolor{gray}{\footnotesize point location}}
    \put(69,  -3){\textcolor{gray}{\footnotesize point location}}
    \end{overpic}
    \caption{\emph{\name learns a geometry-aware weighting function.} We show two different (1D) control point configurations 
    (denote by the star locations). We visualize the learned/calculated weighting function of each control point for both \name (left) and MLS (right) by color coding.
    Note the difference in the weight distribution of the 
    points which are adjacent to the moved 
    point in \name, compared to the lack of change in MLS.}
    \label{fig:2d_illustration}
\end{figure}



\section{Related Work}

Recently, there has been a rising interest in using deep learning to edit shapes by changing their geometry or their appearance through deformations \cite{liu2021editing, Mehr_2019_ICCV, uy-joint-cvpr21, wei2020learning}.
Our paper focuses on shape deformation which has been studied extensively in the literature for the past several decades \cite{10.2312:egt.20091068, yuan2021revisit}. One of the common schemes are referred to as space-based deformation, where the deformation is applied over the entire ambient space containing the shape. Among space-based methods are free-form deformation (FFD)~\cite{10.1145/15886.15903}), cage-based deformation (CBD)~\cite{10.1145/1186822.1073229, 10.1145/1276377.1276466, 10.1145/1360612.1360677} and moving least squares (MLS)~\cite{10.1145/1141911.1141920, Zhu20073DDU}.
As space-based deformations seek to interpolate a given set of point displacements over the entire space, the main challenge that arises is: how to define an interpolation that follows desired deformation properties, such as piecewise smoothness and detail preservation of the input shape.

Deep learning has been used for shape alignment (i.e., source to target deformations), to mitigate known limitations in existing hand-crafted techniques. For instance, Hanocka et al. \cite{hanocka2018alignet} learn a FFD grid which can warp source shapes to incomplete target shapes. Yifan et al. \cite{Yifan:NeuralCage:2020} use a neural network to learn a cage based deformation (CBD) and facilitate the cage construction task.
Other works harness neural networks to learn a per-point displacement \cite{wang20193dn, groueix19cycleconsistentdeformation}.
Another limitation with the above methods is their inability to provide a simple and intuitive interface for shape manipulation. Constructing a cage is a rather complex task, where as annotating a few control points is easy and straight-forward.

Shape deformation based on a sparse set of control points provides an intuitive interface for shape manipulation. Control point based methods enable the user to guide the deformation using various constraints. In As-Rigid-As-Possible (ARAP)~\cite{10.2312:SGP:SGP07:109-116}, the control points impose hard constraints on a distortion minimization objective. In KeypointDeformer (KPD)~\cite{jakab2021keypointdeformer}, the control points are used to influence a cage enclosing the shape. Moving least squares (MLS)~\cite{10.1145/1141911.1141920, Zhu20073DDU} uses the control points to construct a least squares minimization problem.
Although control point based methods often yield desirable results, they still suffer from various drawbacks.
ARAP~\cite{10.2312:SGP:SGP07:109-116} is impractical for large meshes and requires high quality, manifold, single-connected component mesh, which is a considerable demand especially for scanned objects. KPD~\cite{jakab2021keypointdeformer} is required to be trained on the same class used at inference time, and only provides a fixed amount and initial placement of control points for the user.
MLS~\cite{Zhu20073DDU} might suffer from local artifacts around the control points and over smoothing of rigid parts, as seen in Figures~\ref{fig:ShapeDeformationResults} and~\ref{fig:MLS_ablation_alpha}.


The spectral bias and smoothness properties of neural networks studied in \cite{pmlr-v97-rahaman19a, basri2020frequency, Bietti2019OnTI} suggest that networks learn low frequencies first and provide a type of smooth interpolation over the training data. 
This can be a desired property for many applications in computer graphics and vision, as demonstrated by the self-prior concept when training on a single or few self-supervised examples,
such as in various image tasks~\cite{8579082}, mesh reconstruction~\cite{10.1145/3386569.3392415, Wei2021DeepHS} and point cloud consolidation~\cite{metzer2020self}.
Furthermore, the works of~\cite{Savarese19How,Williams19Gradient,Ongie2020Function,giryes2020function} showed that the types of functions learned by neural networks tend to be piecewise smooth, which is particularly interesting for shape manipulation, as it enables arbitrarily control on the part level while smoothly interpolating details within each part separately. 
This type of deformations is especially difficult to produce with previous closed-form hand-crafted techniques, that typically struggle with the inherent tradeoff between smooth and piecewise, as can be seen in Figures~\ref{fig:ShapeDeformationComparison} and~\ref{fig:ShapeDeformationResults}.



\begin{figure*}
    \centering
    \includegraphics[width=\textwidth]{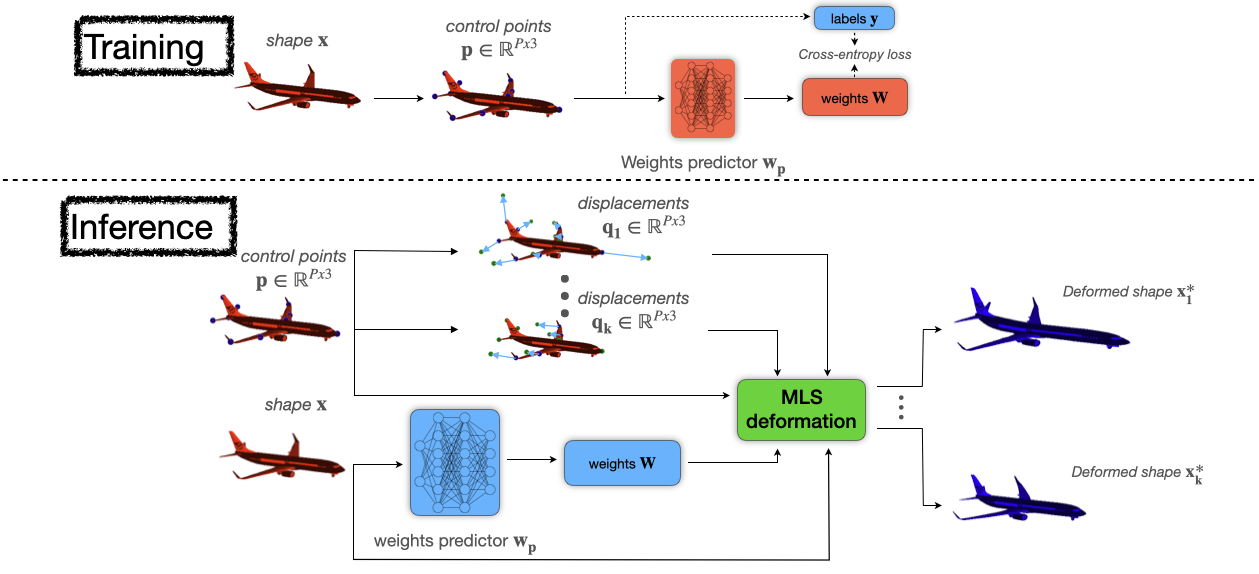}
    \caption{\emph{Method overview.} \textbf{Training (top)}: starting with a shape $\mathbf{x}$ and given control points, we train a weighting function network $\mathbf{w_p}$ to learn to map each control point position to an assigned 1-hot encoding.  
    \textbf{Inference (bottom)}: 
    during inference, we freeze the network weights and each point in the shape $\mathbf{x}$ passes through $\mathbf{w_p}$ which generates the per-point weights $\mathbf{W}$. The user can then interactively displace the control points to new locations $\mathbf{q}$, which will generates a different shape $\mathbf{x^*}$ using an MLS deformation.}
    \label{fig:overview}
\end{figure*}
%

\section{Method}

Our method, \name, learns a weighting function, which is then used within the MLS framework. Our technique is self-supervised in the sense that it is trained on a single input shape and a set of user annotated control-points.
The learned weighting function turns MLS from an interpolation method into an approximation method, in the sense that each control is a soft constraint rather than a hard one, yielding a desirable approximation that is piece-wise smooth.

To learn the weighting function, we treat each control point as a class, and train a neural network to output the corresponding class, based on the control point's xyz position in space.
Then, given a new arbitrary point in space as input, the network assigns soft-max probabilities to each control-point (class) that are used as weights for MLS. 


In the following, we elaborate more on the technique. We first give a brief overview on plain MLS, and then continue to explain how we train and use the network.

\subsection{Moving Least Squares (MLS)}
Moving-Least-Squares is a type of space-deformation techniques, where rather than exclusively deforming the source shape $S$, the entire ambient space is deformed. In MLS, the set of the given control points and their displacements, denoted by $\mathcal{P}$ and $\mathcal{Q}$, are projected into the entire space.
Specifically, it finds for any given point in space $\mathbf{x}\in \mathbb{R}^3$, the deformation  $T_x(\cdot)$ that minimizes the following objective:
\begin{equation}
\label{eq:1}
    E = \sum_i w_i(\mathbf{x})|T_x(\mathbf{p_i}) - \mathbf{q_i} |^2,
\end{equation}
where $\mathbf{p_i}$ are the control points original locations, $\mathbf{q  _i}$ are the control points new locations and $w_i$ are the weighting schemes of each control point. 

The common weighting function used is an Euclidean-based one:
\begin{equation} \label{eq:2}
    w_i(\mathbf{x}) = \frac{1}{d(\mathbf{p_i}, \mathbf{x})^{2\alpha}},
\end{equation}
where $d$ is the distance metric (Euclidean in MLS) and $\alpha$ is a fall-off parameter that weigh the distance function.
A close-form solution for $T_x$ can be obtained when the space of possible deformations is constrained to be an affine, similarity or rigid deformation \cite{10.1145/1141911.1141920, Zhu20073DDU}.

\subsection{Learning a Geometry Aware Weighting Function}
The MLS framework can be seen as a deformation function $f: \mathbb{R}^3 \times \mathbb{R}^{Px3} \times \mathbb{R}^{Px3}
\times w_p\rightarrow \mathbb{R}^3$, where the inputs are a point in space $\mathbf{x}$, 
the control points $\mathcal{P}$, the control points after displacement $\mathcal{Q}$ and a weighing function  $w_p:\mathbb{R}^3\rightarrow \mathbb{R}_{\geq 0}^{P}$, and the output is the deformed point location. For performing the deformation, we apply $f$ on all the points of the given shape.
Our work focuses on learning a new weighting function, which we plug into the MLS framework and gain the following desired deformation properties, for the deformation function $f$:
\begin{enumerate}
    \item \emph{Approximation}: the control points $\mathcal{P}$ should map \emph{approximately} to their displaced locations $\mathcal{Q}$, i.e., $f(\mathbf{p_i})\approx \mathbf{q_i}$.
    \item \emph{piecewise Smoothness}: $f$ should produce piecewise smooth deformations.
    \item \emph{Identity}: If the control points have not been displaced then $T_x$ should be the identity function, i.e., $f(\mathbf{x}) = \mathbf{x}$.
\end{enumerate}
As the third property is guaranteed by the principles of MLS, our effort concentrate on the first two.

Our weighting function ${w_p}: \mathbb{R}^3 \rightarrow \mathbb{R}_{\geq 0}^{P}$ is computed using a simple multi-layer perceptron (MLP), which consists of 2 hidden layers with a ReLU activation, along with a softmax normalization layer in order to get non-negative weights. 

We use a proxy classification task, where each given control point $\mathbf{p_i}\in\mathcal{P}$ is considered as a class. We train our network by minimizing a categorical cross-entropy classification loss such that the input is the control point and its label is its corresponding class. We find this type of training with a neural network a proper fit due to the piecewise smooth interpolation property of neural networks \cite{Savarese19How,Williams19Gradient,Ongie2020Function,giryes2020function} and the ability to control the approximation degree via a simple temperature parameter within the softmax normalization. As a result, the network provides us with a weight for each point, where the weight function has a piecewise smooth structure due to the network inherent bias.

Once the control points are given, our method has a warm-up phase, i.e., the training of the network, where we first construct the appropriate network by adjusting its output layer to be of the size of the number of control points and then train it on this set of control points (notice that the training is independent of the control points displacements). As the number of network parameters and training examples is small, this phase takes several seconds even on a personal laptop. Now, as we have our weighting function in hand, the user can manipulate the control points in order to interactively deform the shape, as illustrated in Figure~\ref{fig:overview}.

By posing our problem as a space classification one, we implicitly inject awareness to the underlying control point geometric configuration, as each control point also acts as a negative example for all the others. This makes the learned weight function depend on an entire control point configuration, rather than being affected by each control point independently, as illustrated in Figure~\ref{fig:2d_illustration}.
This property, for rigid deformations, is not manifested in the Euclidean-based weighting function (or any hand-crafted weighting scheme that is independent of the rest of the control points), as can be concluded from the following proposition that shows that MLS output does not change if we normalize the weights it is using:

\begin{prop}
Let $\mathbf{x}$ be a point in space, $\mathcal{P}$ a set on control points, $\mathcal{Q}$ be their displacements and $w$ the Euclidean-based weighting function, i.e. $w_i(\mathbf{x}) = \frac{1}{d(\mathbf{p_i},\mathbf{x})^{2\alpha}}$. Let $T_x=MLS(\mathbf{x},\mathcal{P},\mathcal{Q},w)$ be the deformation function obtained by plugging $\mathbf{x}$, $\mathcal{P}$, $\mathcal{Q}$ and $w$ to the MLS framework with rigid-deformations constraint, namely, $T_x(\mathbf{y})=M\mathbf{y}+\mathbf{r}$ is the optimal rigid solution for some rotation matrix $M$ and translation vector $\mathbf{r}$. Also, let $\hat{w}$ be the normalized weights, i.e., $\hat{w}_i(\mathbf{x}) = \frac{w_i(\mathbf{x})}{\sum_i w_i(\mathbf{x})}$, and $\hat{T_x}=MLS(\mathbf{x},\mathcal{P},\mathcal{Q},\hat{w})$ be the optimal rigid solution of the MLS framework with the normalized weights.
Then $\hat{T_x}(\mathbf{y})=\hat{M}\mathbf{y}+\mathbf{\hat{r}}\equiv T_x(\mathbf{y})$.
\label{claim:1}
\end{prop}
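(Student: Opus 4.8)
The plan is to observe that replacing $w$ by its normalization $\hat{w}$ multiplies the entire MLS energy by a single positive scalar that is constant over the optimization variables, and that rescaling an objective by such a constant leaves its set of minimizers unchanged. Thus the two rigid solutions must coincide.

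First I would write the two energies explicitly over the rigid family $T_x(\mathbf{y}) = M\mathbf{y} + \mathbf{r}$ with $M^\top M = I$. The energy driven by $w$ is $E(M,\mathbf{r}) = \sum_i w_i(\mathbf{x})\,|M\mathbf{p_i} + \mathbf{r} - \mathbf{q_i}|^2$, and the one driven by $\hat{w}$ is $\hat{E}(M,\mathbf{r}) = \sum_i \hat{w}_i(\mathbf{x})\,|M\mathbf{p_i} + \mathbf{r} - \mathbf{q_i}|^2$, where in both cases $(M,\mathbf{r})$ range over the feasible set of rigid transformations.

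The key step is to note that the normalizing denominator $Z := \sum_j w_j(\mathbf{x})$ is a strictly positive scalar (each $w_j > 0$ by Equation~\eqref{eq:2}) depending only on the fixed query point $\mathbf{x}$ and the fixed control set $\mathcal{P}$ — crucially \emph{not} on $M$ or $\mathbf{r}$. Hence $\hat{w}_i(\mathbf{x}) = w_i(\mathbf{x})/Z$ factors out of the sum, giving $\hat{E}(M,\mathbf{r}) = \tfrac{1}{Z}\,E(M,\mathbf{r})$ for every admissible $(M,\mathbf{r})$. Since $1/Z > 0$ is constant over the feasible set, the two energies are positive scalar multiples of one another and therefore attain their minima at exactly the same arguments; scaling a function by a positive constant cannot move its $\arg\min$. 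Consequently the optimal pair $(\hat{M},\mathbf{\hat{r}})$ equals $(M,\mathbf{r})$, so $\hat{T_x}(\mathbf{y}) = \hat{M}\mathbf{y} + \mathbf{\hat{r}} = M\mathbf{y} + \mathbf{r} = T_x(\mathbf{y})$.

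The only point deserving care — and the nearest thing to an obstacle — is the identifiability of the minimizer: the clean statement $\hat{T_x} \equiv T_x$ is immediate when the rigid minimizer is unique, which holds generically (e.g.\ whenever the control points are not all collinear and the weights are nondegenerate). To be fully rigorous without that assumption, I would simply phrase the conclusion at the level of minimizer \emph{sets}: because $\hat{E} = E/Z$, the two problems share an identical set of optimal rigid transformations, so any fixed tie-breaking convention returns the same $T_x$ in both cases. Either way, the normalization is inert for the rigid MLS solve, which is precisely the claim.
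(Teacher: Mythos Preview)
Your argument is correct and is in fact more elementary and more general than the paper's. The paper does not appeal to the invariance of $\arg\min$ under positive rescaling; instead it rederives the closed-form rigid MLS solution in the spirit of \cite{Zhu20073DDU,10.1145/1141911.1141920}: it shows the weighted centroids $\mathbf{p_*},\mathbf{q_*}$ are unchanged under normalization, then reduces the rotation subproblem to maximizing $\mathrm{tr}(\hat M\,\hat P\hat Q^t)$, observes $\hat P\hat Q^t = c\,PQ^t$ for a positive scalar $c$, and concludes via the SVD that $\hat M = VU^t = M$ and hence $\hat{\mathbf r}=\mathbf r$. Your route bypasses all of this machinery with a one-line scaling observation, and it applies verbatim to \emph{any} constrained family of transformations (affine, similarity, rigid) and to \emph{any} weights, not just the Euclidean ones in Equation~\eqref{eq:2}. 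What the paper's longer derivation buys is constructiveness: it exhibits the actual optimal $(M,\mathbf r)$ and pinpoints exactly where the normalization constant is absorbed (into the singular values, leaving $U,V$ intact), which mirrors how the solution is computed in practice. Your remark about non-uniqueness is a fair caveat; the paper's proof also tacitly assumes the SVD-based maximizer is well defined, so you are not imposing a stronger hypothesis than the original.
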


\begin{proof}
Our proof follows similar transitions and notations as in \cite{Zhu20073DDU} and \cite{10.1145/1141911.1141920}, but for completeness we attach the full proof here.\\
Let $\mathbf{p_*}$ and $\mathbf{q_*}$ be the weighted centroids of $\mathbf{p_i}$'s and $\mathbf{q_i}$'s, respectively:
\begin{equation*}
    \mathbf{p_*} = \frac{\sum_i w_i(\mathbf{x})\mathbf{p_i}}{\sum_i w_i(\mathbf{x})}, \hspace{3mm} \mathbf{q_*} = \frac{\sum_i w_i(\mathbf{x})\mathbf{q_i}}{\sum_i w_i(\mathbf{x})}
\end{equation*}
Similarly, define $\mathbf{\hat{p}_*}$ and $\mathbf{\hat{q}_*}$ as the weighted centroids using the normalized weights. Then we have:
\begin{equation*}
    \mathbf{\hat{p}_*} = \frac{\sum_i \hat{w}_i(\mathbf{x})\mathbf{p_i}}{\sum_i \hat{w}_i(\mathbf{x})} = \sum_i \hat{w}_i(\mathbf{x})\mathbf{p_i} = \sum_i\frac{w_i(\mathbf{x})\mathbf{p_i}}{\sum_j w_j(\mathbf{x})} = \mathbf{p_*}
\end{equation*}
And similarly, $\mathbf{\hat{q}_*} = \mathbf{q_*}$.\\
By plugging $T_x(\mathbf{y})=\hat{T_x}(\mathbf{y})=\hat{M}\mathbf{y}+\mathbf{\hat{r}}$ into equation~\ref{eq:1} we get a quadratic dependency in $\mathbf{\hat{r}}$. Since the minimizer is where the derivatives with respect to each of the free variables in $\hat{T_x}$ are zero, we can solve directly for $\mathbf{\hat{r}}$ in the terms of the matrix $\hat{M}$. Taking the partial derivative w.r.t. the free variables in $\mathbf{\hat{r}}$ produces a linear system of equations. Solving for $\mathbf{\hat{r}}$ yields that 
\begin{equation*}
    \mathbf{\hat{r}}=\mathbf{\hat{q}_*}-\mathbf{\hat{p}_*}\hat{M}=\mathbf{q_*}-\mathbf{p_*}\hat{M}
\end{equation*}
This leaves only $\hat{M}$ to be determined. Note:
\begin{equation*}
\begin{split}
    \alpha_i=w_i(\mathbf{x})^{\frac{1}{2}}, \hspace{3mm} \mathbf{\tilde{p}_i} = \mathbf{p_i} - \mathbf{p_*}, \hspace{3mm} \mathbf{\tilde{q}_i} = \mathbf{q_i} - \mathbf{q_*}, \hspace{1mm}\\
    P = (\alpha_1\mathbf{\tilde{p}_1} \cdots \alpha_N\mathbf{\tilde{p}_N}), \hspace{3mm} Q = (\alpha_1\mathbf{\tilde{q}_1} \cdots \alpha_N\mathbf{\tilde{q}_N})
\end{split}    
\end{equation*}
Similarly, we note for the normalized weights:
\begin{equation*}
\begin{split}
    \hat{\alpha}_i=\hat{w}_i(\mathbf{x})^{\frac{1}{2}}, \hspace{3mm}
    \mathbf{\tilde{\hat{p}}_i} = \mathbf{p_i} - \mathbf{\hat{p}_*} = \mathbf{\tilde p_i}, \hspace{3mm}
    \mathbf{\tilde{\hat{q}}_i} = \mathbf{q_i} - \mathbf{\hat{q}_*} = \mathbf{\tilde q_i} \\
    \hat{P} = (\hat{\alpha}_1\mathbf{\hat{\tilde{p}}_1} \cdots \hat{\alpha}_N\mathbf{\hat{\tilde{p}}_N}), \hspace{3mm} \hat{Q} = (\hat{\alpha}_1\mathbf{\hat{\tilde{q}}_1} \cdots \hat{\alpha}_N\mathbf{\hat{\tilde{q}}_N}) \hspace{5mm}
\end{split}    
\end{equation*}
Then
\begin{equation*}
\begin{split}
    E = \sum_i \hat{w}_i(\mathbf{x})|\hat{M}\mathbf{\hat{\tilde{p}}_i}-\mathbf{\hat{\tilde{q_i}}}|^2 =
    \sum_i |\hat{M}\hat{\alpha_i}\mathbf{\hat{\tilde{p}}_i}-\hat{\alpha_i}\mathbf{\hat{\tilde{q_i}}}|^2 = \\
    ||\hat{M}\hat{P}-\hat{Q}||^2_F = tr((\hat{M}\hat{P}-\hat{Q})^t(\hat{M}\hat{P}-\hat{Q})) = \\
    tr(\hat{P}^t\hat{P})+tr(\hat{Q}^t\hat{Q})-2tr(\hat{Q}^t\hat{M}\hat{P}) \hspace{16mm}
\end{split}    
\end{equation*}
where $||\cdot||_F$ is the Frobenius norm. Since $\hat{P}$ and $\hat{Q}$ are constant, minimizing $E$ corresponds to maximizing $\psi= tr(\hat{Q}^t\hat{M}\hat{P}) = tr(\hat{M}\hat{P}\hat{Q}^t)$.\\
Now, observe that the following holds:
\begin{equation*}
\begin{split}
    \hat{P}\hat{Q}^t=
    \sum_{i}\hat{\alpha}_i \hat{\alpha}_i \mathbf{\tilde{\hat{p}}_i} (\mathbf{\tilde{\hat{q}}_i})^t =
    \sum_{i}\hat{\alpha}_i \hat{\alpha}_i \mathbf{\tilde{p}_i} (\mathbf{\tilde{q}_i})^t = 
    \sum_{i}\hat{w}_i(\mathbf{x}) \mathbf{\tilde{p}_i} (\mathbf{\tilde{q}_i})^t = \\
    \sum_{i}(\frac{w_i}{\sum_k w_k} \mathbf{\tilde p_i} (\mathbf{\tilde q_i})^t) = 
    \frac{1}{\sum_k w_k}\sum_{i}w_i \mathbf{\tilde p_i} (\mathbf{\tilde q_i})^t = \hspace{14mm} \\
    \frac{1}{\sum_k w_k}\sum_{i}\alpha_i \alpha_i \mathbf{\tilde p_i} (\mathbf{\tilde q_i})^t = 
    [c\equiv \frac{1}{\sum_k w_k}] =
    c \sum_{i}\alpha_i \alpha_i \mathbf{\tilde p_i} (\mathbf{\tilde q_i})^t = cPQ^t
\end{split}
\end{equation*}
Therefore, the singular value decomposition of $\hat{P}\hat{Q}^t = \hat{U}\hat{\Lambda}\hat{V}^t$ satisfies $\hat{U} = U, \hat{V}=V$ and $\hat{\Lambda} = c\Lambda$, where $U\Lambda V^t$ is the singular value decomposition of $PQ^t$. Thus
\begin{equation*}
    \psi = tr(\hat{M}cPQ^t) = tr(\hat{M}cU\Lambda V^t) = tr(U^t\hat{M}^tVc\Lambda)
\end{equation*}
Write $N=U^t\hat{M}^t V$, then $N$ is orthogonal since $U$, $\hat{M}$ and $V$ are orthogonal. It follows that $|N_{i,j}| \leq 1$, and
\begin{equation*}
    \psi = tr(Nc\Lambda) = \sum_{i=1}^{3} N_{i,i}c\lambda_i \leq \sum_{i=1}^{3} c\lambda_i
\end{equation*}
Hence, $\psi$ is maximized when $N=I \iff \hat{M}=VU^t=M$
and therefore $\mathbf{\hat{r}} = \mathbf{q_*}-\mathbf{p_*}M = \mathbf{r}$.
\end{proof}

From Proposition~\ref{claim:1}, we conclude that for rigid solutions, observing the weights $w_i$ with no normalization is enough as the solutions are the same, and therefore adding a new control point or manipulating others has so effect on the \emph{weight} of a certain control point. Thus, the weight of each control point is \emph{not} adaptive to the specific control point configuration.

\subsection{Approximating vs. Interpolating}
As we use a softmax normalization layer in our construction of the weighting function, we can add a temperature scaling parameter to the network's output, i.e.
\begin{equation}
    [w_p(\mathbf{x})]_i = \frac{e^{\frac{w_i(\mathbf{x})}{T}}}{\sum_i e^{\frac{w_i(\mathbf{x})}{T}}},
\end{equation}
where $w_i(\mathbf{x})$ are the network's outputs.

The temperature enables the user to control the degree of approximation versus interpolation.
As setting $T\rightarrow 0$ result in a \textit{sharper} weight distribution i.e. making the weight of the most dominant control point  approach $1$ and the other weights approach $0$.
In Section~\ref{ss:ApproVsInter}, we show the impact of this temperature parameter. 


\section{Experiments}
\label{sec:exp}

In this section we demonstrate the key benefits of our method, compared to plain MLS and other existing techniques. As part of the experiments, we show that our method is able to achieve more intuitive and realistic shape deformations  both qualitatively and quantitatively (Section~\ref{ss:shapeDeformation}), explore our more piece-wise smooth weighting function (Section~\ref{ss:piecewieseDeformation}), and demonstrate control over the approximation degree (Section~\ref{ss:ApproVsInter}).
We also include dedicated experiments showcasing the limitations of the Euclidean-based weighting function we propose to replace (Section~\ref{ss:ablation}).

\subsection{Experimental Setup}

\textbf{Data.} In the experiments, we demonstrate our performance abilities on data from different categories. Our method require no training data besides the user specified control points and input shape to deform.
In order to compare to KeypointDeformer~\cite{jakab2021keypointdeformer} we trained their method on ShapeNet~\cite{shapenet2015}, as it can not be applied on a single input.

\textbf{Implementation details.} Our weighting function $\mathbf{w_p}$ is implemented as a neural network using a simple multi-layer perceptron (MLP) consisted of 2 hidden layers of width 1024 and ReLU activation. The output layer size is set to the amount of user-given control points. We use the Adam optimizer~\cite{kingma2017adam}, and
unless stated otherwise, all results are attained using a soft-max temperature of $1$ and constraining the MLS solution to rigid deformations.

\textbf{Control points annotation.} 
All methods we compare to, except KeypointDeformer~\cite{jakab2021keypointdeformer}, require and allow control points that are manually annotated by the user.
In the evaluations against KeypointDeformer, we use the control points given by their keypoint predictor, as KeypointDeformer can not be conditioned on arbitrarily positioned control points. Yet, the displacement of each control-point is still annotated by the user.
Also, in the case of ARAP~\cite{10.2312:SGP:SGP07:109-116}, a mesh vertex must be chosen as a control point. 
Therefore, we simply use the nearest vertex to each of the user annotated control points, when comparing to ARAP.

\subsection{Shape Deformation}
\label{ss:shapeDeformation}
We demonstrate the advantages of our method by comparing to both classic and learning-base methods, and evaluate the results by qualitative visualizations and quantitative metrics. We also conducted a user study which suggested that people perceive our deformations as more realistic.

\textbf{Qualitative Evaluation.} Figure~\ref{fig:ShapeDeformationComparison} contains visual comparisons to plain MLS~\cite{Zhu20073DDU}, ARAP~\cite{10.2312:SGP:SGP07:109-116} and KeypointDeformer~\cite{jakab2021keypointdeformer}.
Our method is able to better preserve local features of the source shape, while still adhering the control points guidance across all categories.
The control point displacements are manually selected to reflect reasonable shape deformations,
such as starching the chair seat, changing the wings position of the airplane or bowing the armadillo. Figure~\ref{fig:ShapeDeformationComparison} also demonstrate the limitations of previous approaches.
For the armadillo shape (bottom row), KPD~\cite{jakab2021keypointdeformer} result can not be obtained,
as the armadillo shape does not fall into any of the categories KPD was trained on.
\begin{figure*}
    \centering
    \newcommand{\pl}{77}
    \begin{overpic}[width=\textwidth]{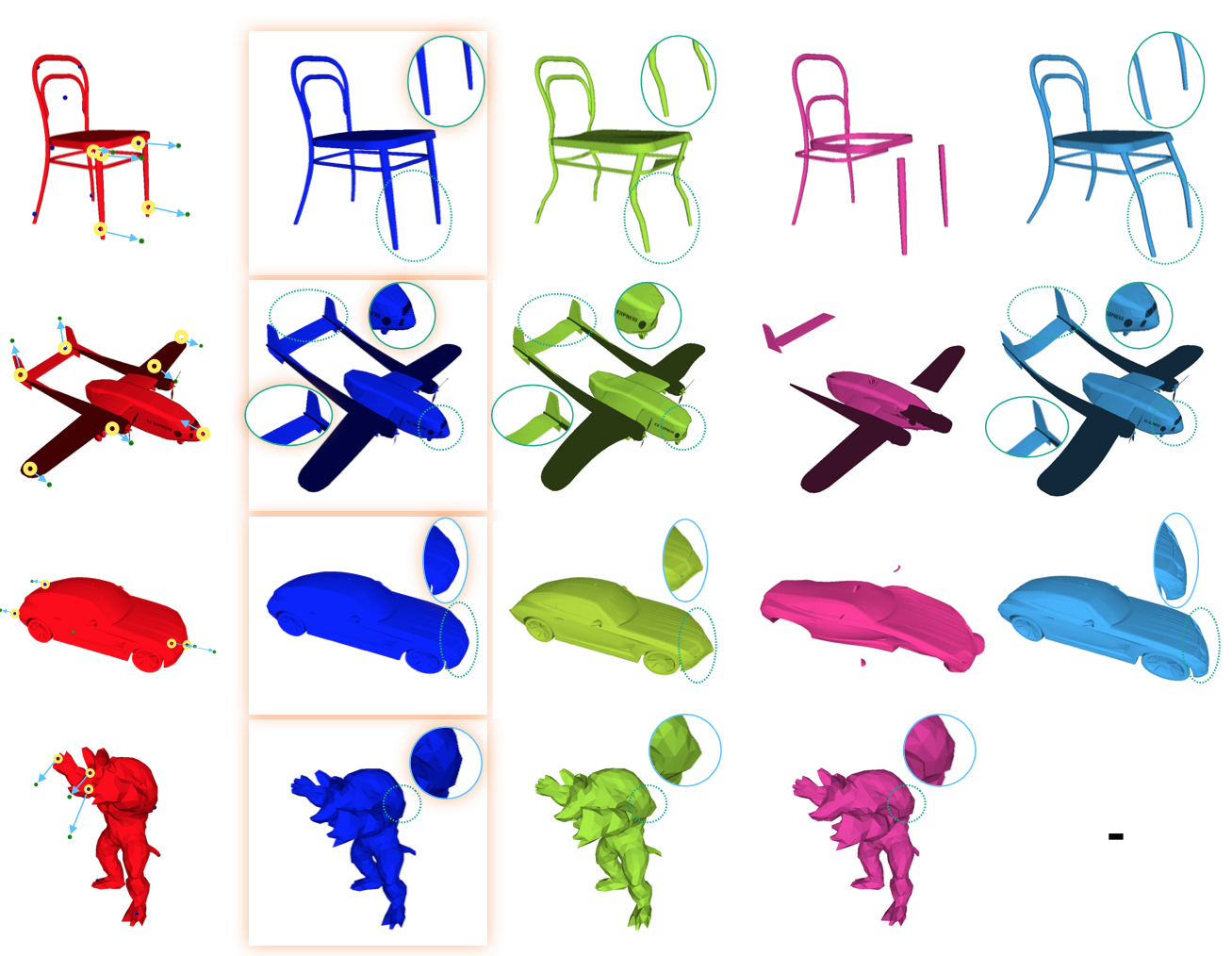}
    \put(6,  \pl){{Source}}
    \put(29,  \pl){{Ours}}
    \put(47,  \pl){{MLS\cite{Zhu20073DDU}}}
    \put(68,  \pl){{ARAP\cite{10.2312:SGP:SGP07:109-116}}}
    \put(87,  \pl){{KPD\cite{jakab2021keypointdeformer}}}
    
    \end{overpic}

    \caption{Comparison of our method to other control point based deformation methods. Our method achieves desirable results compared to traditional and learning-based deformation techniques. 
    Note that the first three rows are non-manifold meshes, which ARAP does not support. Since KPD requires a dataset to be trained on, results for the "armadillo" shape (bottom row) cannot be obtained.}
    \label{fig:ShapeDeformationComparison}
\end{figure*}
Additional results and comparison to plain MLS can be seen in Figure~\ref{fig:ShapeDeformationResults},
where it shows that MLS suffers from local artifacts.
We claim that the artifacts in Figure~\ref{fig:ShapeDeformationResults} are due to the very high weights, experienced by surface points that are close to the input control points, as further analysis shows in Section~\ref{ss:piecewieseDeformation}.
\begin{figure}
    \centering
    \newcommand{\pl}{101}
    \begin{overpic}[width=\columnwidth]{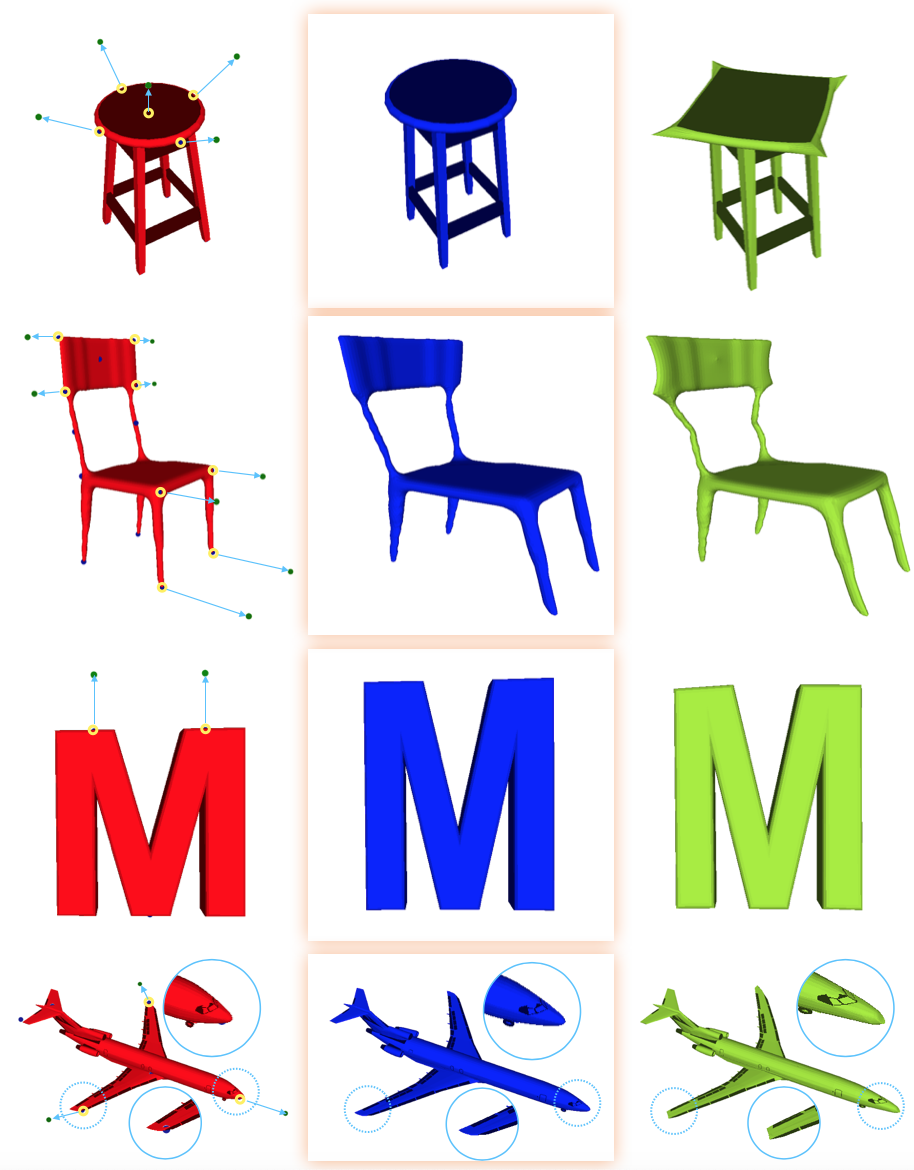}
    \put(9,  \pl){{Source}}
    \put(36,  \pl){{Ours}}
    \put(60,  \pl){{MLS\cite{Zhu20073DDU}}}
    \end{overpic}
    \caption{Deforming source shapes from the left column with user annotated control points, visualized by the blue arrows.
    Observe how MLS suffers from local artifact near the input control points, while our method is able to avoid these issues.
    }
    \label{fig:ShapeDeformationResults}
\end{figure}

\textbf{Quantitative Evaluation.} 
To evaluate the quality of our deformations quantitatively, we measured the distortion in discretized Laplacian magnitude and mean curvature of the deformed shapes, versus the source shapes, in Figure~\ref{fig:ShapeDeformationComparison}.
These measures infer the amount of details preserved on the deformed shape,
and therefor are desirable for a visually appealing deformation.
We used the difference in the Laplacian magnitude rather than the actual Laplacian as we want to be invariant to rotations, that are a desirable property of the deformation, and only want to penalize loss of details that is expressed through the Laplacian magnitude.
As one trivial way to achieve zero distortion in the above measure, is to not deform the shape at all, we also compare the approximation level of the deformed control points to their given displacements.
The results are displayed in Figure~\ref{fig:lap_vs_l2} and Table~\ref{table:quan_res}.

\textbf{User Study.}
We conducted a user study in order to evaluate how realistic our deformations look compared to MLS and KPD, ARAP was excluded on the study, as the deformations failed due to bad triangulation and multiple connected components of the input mesh. 
The user study was done over three different categories (Car, Chair, Airplane), each category contained three different shape,
and three different deformation were applied on each shape (a total of 27 different deformations).
We asked users to choose the most realistic-looking deformation, among the different deformation methods applied on each set of shape and control point configuration.
23 different users replied to our study and the results are displayed in Table~\ref{table:user_study}. The full user study is attached in the supplementary material.

\begin{table}
\centering
\caption{Quantitative evaluation of our method in comparison to other approaches, using the average mean curvature difference between the source shape and the deformed shape. ARAP results for the top 3 shapes are omitted as they are meaningless due to the deformations failing, as can be seen in Figure~\ref{fig:ShapeDeformationComparison}. }
\begin{tabular}{c||c|c|c|c} 
 & \multicolumn{4}{c}{Average Mean Curvature ($\downarrow$)} \\
Shape & Ours & MLS & KPD & ARAP \\
\hline\hline 
Chair     & $\mathbf{16}$ & 19 & 42 & -  \\
\hline
Airplane  & $\mathbf{1764}$ & 1880 & 2142 & - \\
\hline
Car       & $\mathbf{5.2}$ & 8.9 & 17.8 & -  \\
\hline
Armadillo & $\mathbf{0.412}$ & 0.578 & - & 0.574 \\
[1ex] 
\hline 
\end{tabular}
\label{table:quan_res}
\end{table}

\begin{table}
\centering
\caption{Results of our user study are presented in the table below. Users were asked to choose the deformation that looks the most realistic, across results produce by \name, MLS and KPD.
The numbers in the table represent the \% of users that chose the result corresponding to each method.
The study concludes that our method is more likely to produce realistic looking results,
compared to MLS and KPD, as indicated by more users choosing \name for every category in the study.}
\begin{tabular}{c||c|c|c} 
 & \multicolumn{3}{c}{\% of users $(\uparrow)$} \\
Category & Ours & MLS & KPD \\
\hline\hline 
Chair     & $\mathbf{77}$\% & 14\% & 9\% \\ \hline
Airplane  & $\mathbf{56}$\% & 12\% & 32\% \\ \hline
Car       & $\mathbf{71}$\% & 16\% & 13\%  \\
[1ex] 
\hline 
\end{tabular}
\label{table:user_study}
\end{table}

\begin{figure}
    \centering
    \includegraphics[width=\columnwidth]{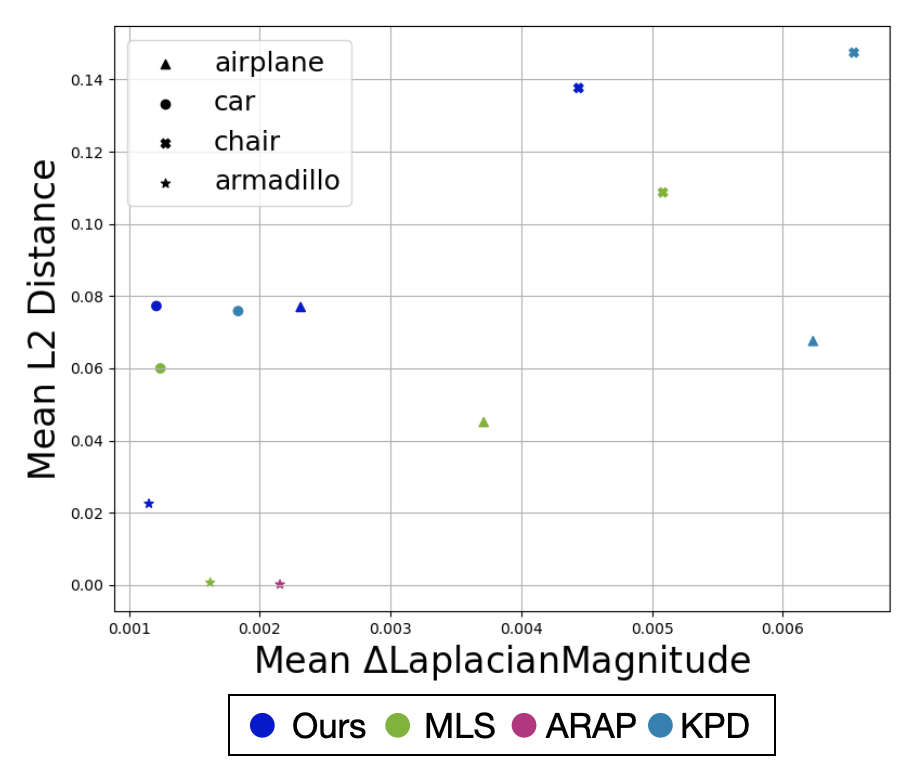}
    \caption{Quantitative comparison of our method against other approaches from Figure~\ref{fig:ShapeDeformationComparison}. Each point is embedded according to its distortion (mean $\Delta$LaplacianMagnitude) and approximation degree of the control points (Mean L2 Distance).
    Point near the bottom left corner are considered as "better".
    ARAP results are omitted for the Chair, Airplane and Car shapes as the deformations failed.
    Notice how our approach is able to achieve the lowest distortion across all categories, while still deforming the shape according to the specified control points.}
    \label{fig:lap_vs_l2}
\end{figure}

\subsection{Piecewise Smooth Deformation}
\label{ss:piecewieseDeformation}
Piecewise smoothness is a desired property for deformations,
as it allows to apply arbitrarily different manipulation to different parts of the shape,
while keeping a naturally and smooth interpolation within and between each part.
The visualizations depicted in Figure~\ref{fig:piecewise_smooth}, show the MLS and \name weights and resulting deformation functions, and demonstrate the struggle of plain MLS in producing piecewise solutions.
The Euclidean-based weighting function starts off at a very high value, close to the control points (bottom-left),
which causes the local artifacts that are observed on the sharp chair edges in Figure~\ref{fig:ShapeDeformationResults} (top row).
The weighting function then decrease rapidly as we move further away from the control point until reaching a non-zero plateau. This result in deformation function (top-left) that is both less smooth (sharp gradient near the control points) and less piecewise (the deformation interpolation between two points is non linear).
Our method, on the other hand, produces weights that are much smaller near control points (bottom-right) ,i.e. achieving an approximation instead of interpolation, then slowly and smoothly decrease until reaching a negligible value at some point, creating a more piecewise solution (top-right).

\begin{figure}
    \centering
    \newcommand{\pl}{96}
    \begin{overpic}[width=\columnwidth]{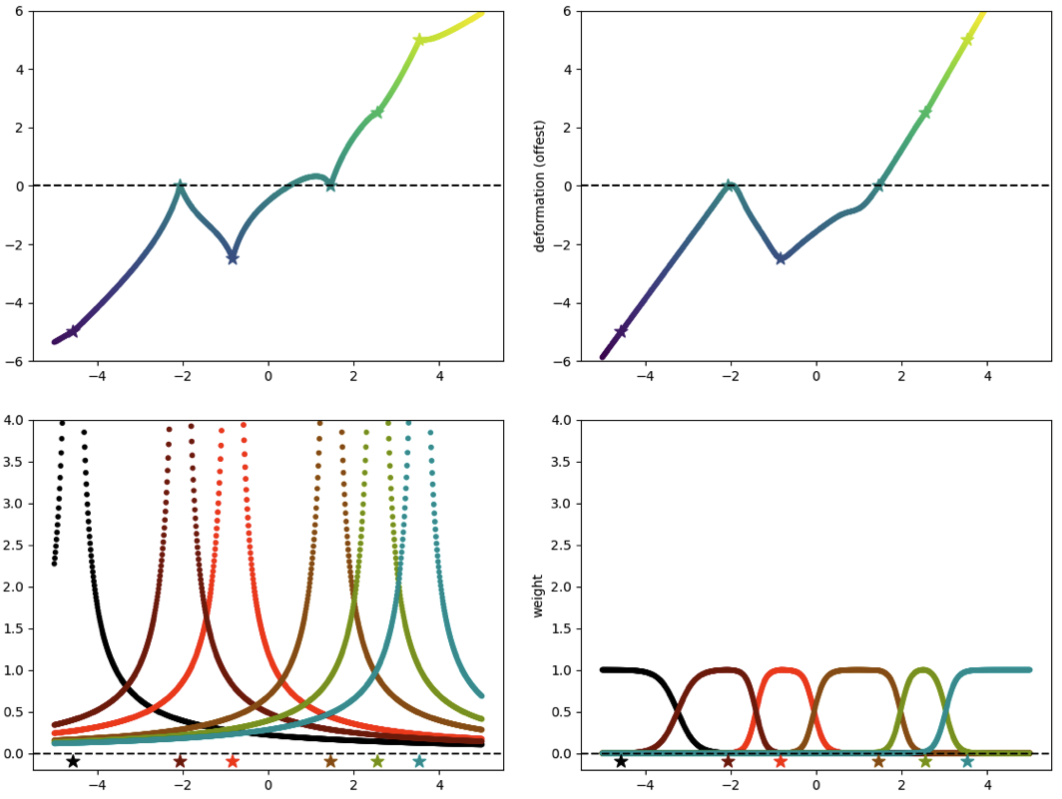}
    \put(18,  76){\textcolor{black}{MLS\cite{Zhu20073DDU}}}
    \put(73,  76){\textcolor{black}{Ours}}
    \put(-4, 14){\rotatebox{90}{\textcolor{gray}{\small weight}}}
    \put(-4, 46){\rotatebox{90}{\textcolor{gray}{\small deformation}}} 
    \put(14,  -3){\textcolor{gray}{\small point location}}
    \put(66,  -3){\textcolor{gray}{\small point location}}
    \end{overpic}
    \caption{The resulting deformation functions (top) and calculated weights (bottom) of our method (right) and MLS (left).
    This illustrates our method's ability to produce more piecewise smooth deformations compared to MLS, as our weight functions are more evenly distributed over each control point adjacent region, where adjacent is with respect to the entire control point configuration, compared to the sharp peaks produced by MLS.}
    \label{fig:piecewise_smooth}
\end{figure}

\subsection{Approximating vs. Interpolating}
\label{ss:ApproVsInter}
We showcase our technique's ability to enable users to control the degree of approximation v.s. interpolation, using a single temperature parameter.
A low temperature value results in sharper weight assignments, mainly considering the closest control point, similar to nearest neighbour interpolation of the input displacements.
On the other hand, a high temperature value results in a smoother combination of the control point specified displacements, i.e. approximation, as the resulting displacement at each control point does not necessarily equal the specified user input exactly.
Changing the temperature value does not require additional training, as it only amounts to scaling the trained network outputs. Figure~\ref{fig:temperature_effect} presents deformation results under different temperatures.
The control points and their displacements are displayed to facilitate the visualization of the approximation quality.
\begin{figure*}
    \centering
    \includegraphics[width=\textwidth]{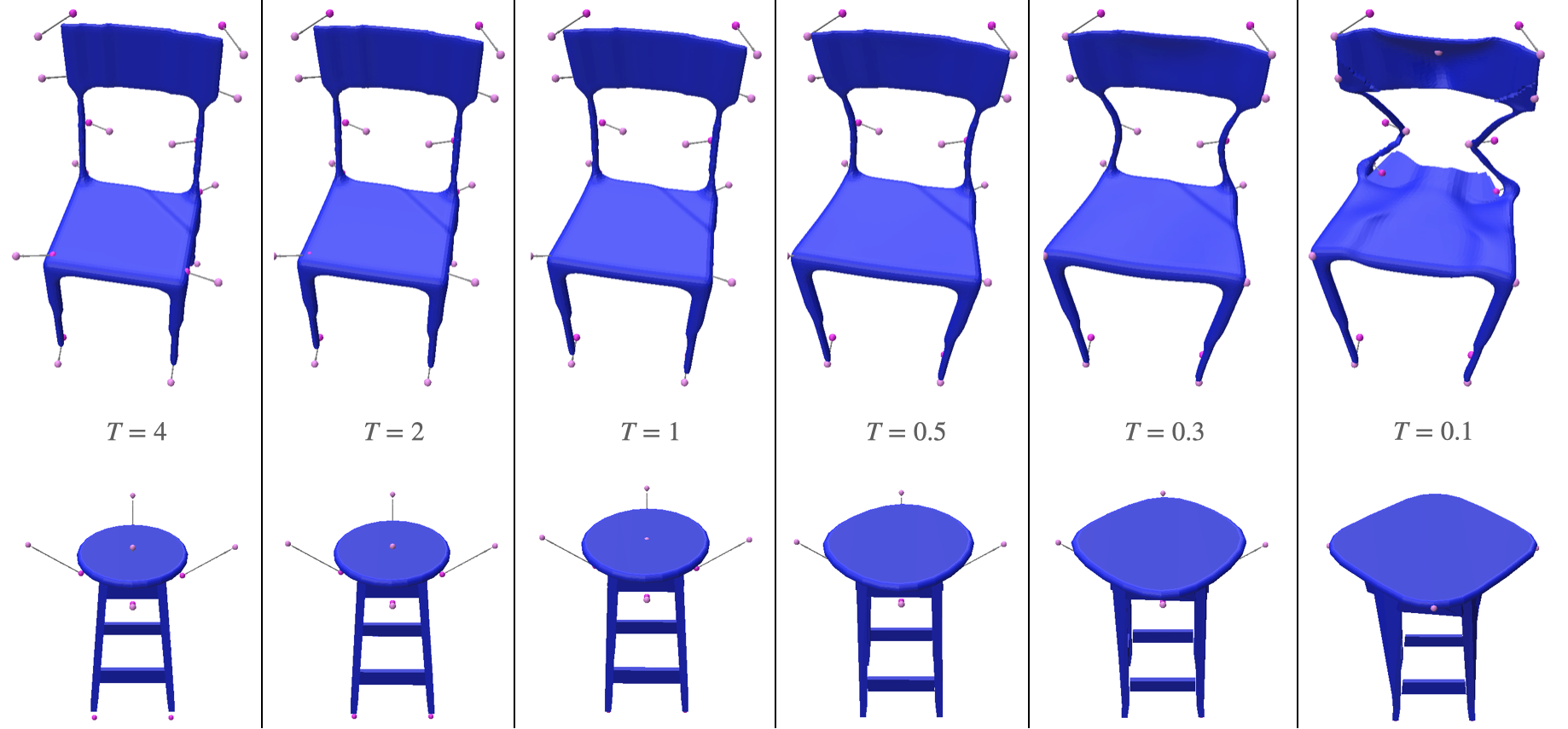}
    \caption{The temperature hyper-parameter effect demonstrated on various chairs.
    The temperature value gives the user control over the approximation level of the deformation, as a trade off between approximation and interpolation of the input control point guidance.}
    \label{fig:temperature_effect}
\end{figure*}



\subsection{MLS Ablation}
\label{ss:ablation}
\textbf{MLS hyper-parameters.} MLS \cite{Zhu20073DDU} allows the user a "fall-out" hyperparameter, i.e., $\alpha$, to control the \textit{typical affecting distance} of each control point.
Figure~\ref{fig:MLS_ablation_alpha} shows that even though $\alpha$ has a smoothing effect on the deformation, the results still suffers from significant artifacts.
This is evident in the collateral damage caused by high influence of irrelevant control points, seen on the legs of the bottom row left most chair, and on the back rest of the top row chair.
This phenomenon can also be foreseen mathematically from Equation~\ref{eq:2},
as changing the $\alpha$ parameter does not resolve the large weights and gradient norms near control points, also discussed in Section~\ref{ss:piecewieseDeformation}.

\begin{figure*}
    \centering
    \includegraphics[width=\textwidth]{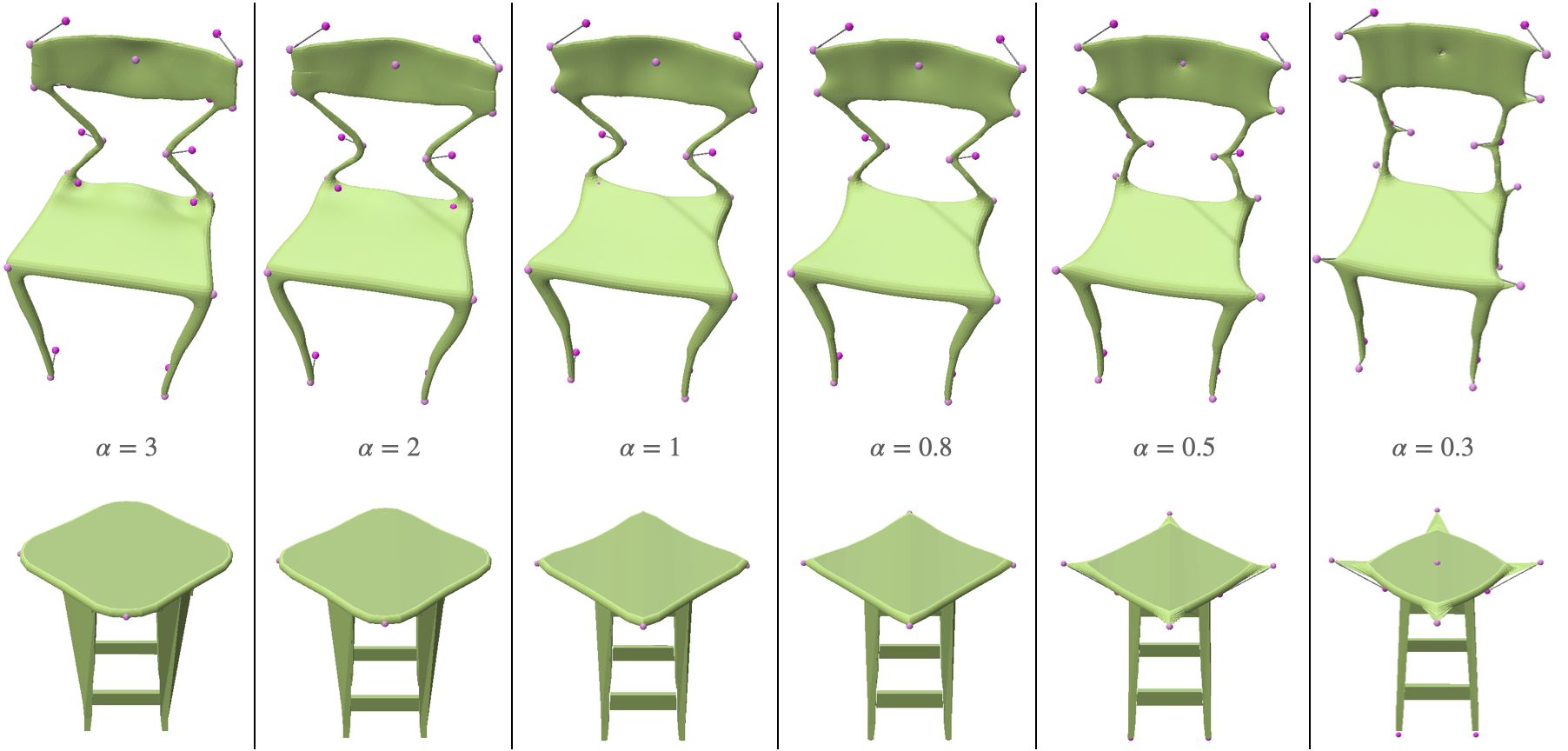}
    \caption{The $\alpha$ hyper-parameter effect in the plain MLS method.
    This parameter provides the user control over the region of influence of each control point. Observe how altering this parameter, even though mitigating certain artifacts, such as sharp edges, causing new ones, e.g. bending rigid legs in the top chair or incompatible control point influence, manifested in the legs of the bottom leftmost chair.}
    \label{fig:MLS_ablation_alpha}
\end{figure*}

Another approach for trying to achieve a better weighting functions based on an Euclidean distance, is to add a small value $\epsilon$, to the denominator of the weighting function, i.e., $w_i(\mathbf{x})=\frac{1}{d(\mathbf{p_i},\mathbf{x})^{2\alpha}+\epsilon}$.
Figure~\ref{fig:MLS_ablation_epsilon} reveals that adding $\epsilon$ does not sufficiently resolve local artifacts and undesirable deformation properties, such as the bending of the legs in the top row chair.

\begin{figure*}
    \centering
    \includegraphics[width=\textwidth]{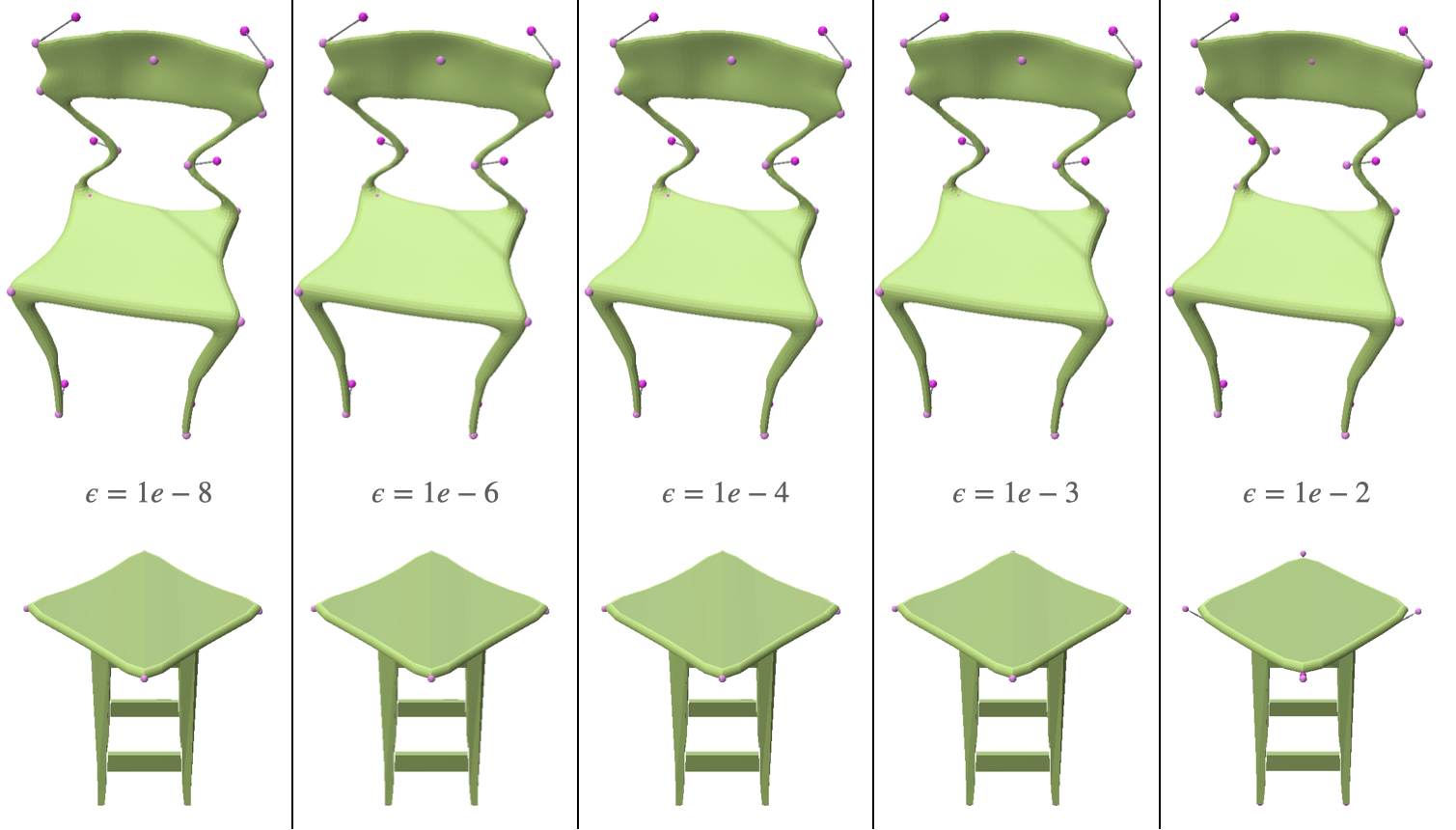}
    \caption{Demonstrating the effect of adding an $\epsilon$ parameter to the denominator of the weighting function of plain MLS on the chair shape.
    This value is used for numerical stability but can also, as a by-product, enable the user to smooth out the deformation of the shape close to control points. Observe that although indeed mitigating the sharp artifacts the deformation still suffers from them, as well as from other problems, such as bending of rigid parts (top chair legs).}
    \label{fig:MLS_ablation_epsilon}
\end{figure*}



\section{Conclusions}

We have presented a geometry-aware space-deformation technique that is based on the MLS framework. The key idea is to leverage the power of neural networks to learn a weighting function associated with the given control points. The learned weights adhere to the geometric configuration of the control points which implicitly respects the underlying geometry, and results in piecewise smoothness.

Traditional MLS treats the control point displacements as \textit{hard-constraints}. On the other hand, our \name treats the control point displacements as \textit{soft-constraints} in order to obtain piecewise-smooth deformations.
Indeed, there is an inherent trade-off between adherence to the constraints and a piecewise-smooth weighting function. Our framework provides a relaxation parameter which can trade-off smoothness for constraint-adherence, which enables intuitive interactive manipulation of the shape.  

Our framework is built for an interactive scenario with user-specified control points and constraints.
In the future, we would like to train a network to learn the location and number of the control points based on an analysis of the input shape. We would like also to consider grouping and structuring the control points to reflect the symmetries and relationships of the input shape, such that the editing will be faithful to the input shape semantics.



\bibliographystyle{unsrt}  
\bibliography{references}

\begin{thebibliography}{10}

\bibitem{10.2312:egt.20091068}
Olga Sorkine and Mario Botsch.
\newblock {Interactive Shape Modeling and Deformation}.
\newblock In K.~Museth and D.~Weiskopf, editors, {\em Eurographics 2009 -
  Tutorials}. The Eurographics Association, 2009.

\bibitem{yuan2021revisit}
Yu-Jie Yuan, Yu-Kun Lai, Tong Wu, Lin Gao, and Ligang Liu.
\newblock A revisit of shape editing techniques: from the geometric to the
  neural viewpoint, 2021.

\bibitem{dosovitskiy2021image}
Alexey Dosovitskiy, Lucas Beyer, Alexander Kolesnikov, Dirk Weissenborn,
  Xiaohua Zhai, Thomas Unterthiner, Mostafa Dehghani, Matthias Minderer, Georg
  Heigold, Sylvain Gelly, Jakob Uszkoreit, and Neil Houlsby.
\newblock An image is worth 16x16 words: Transformers for image recognition at
  scale, 2021.

\bibitem{mao2021voxel}
Jiageng Mao, Yujing Xue, Minzhe Niu, Haoyue Bai, Jiashi Feng, Xiaodan Liang,
  Hang Xu, and Chunjing Xu.
\newblock Voxel transformer for 3d object detection, 2021.

\bibitem{dai2021coatnet}
Zihang Dai, Hanxiao Liu, Quoc~V. Le, and Mingxing Tan.
\newblock Coatnet: Marrying convolution and attention for all data sizes, 2021.

\bibitem{vaswani2017attention}
Ashish Vaswani, Noam Shazeer, Niki Parmar, Jakob Uszkoreit, Llion Jones,
  Aidan~N. Gomez, Lukasz Kaiser, and Illia Polosukhin.
\newblock Attention is all you need, 2017.

\bibitem{brown2020language}
Tom~B. Brown, Benjamin Mann, Nick Ryder, Melanie Subbiah, Jared Kaplan,
  Prafulla Dhariwal, Arvind Neelakantan, Pranav Shyam, Girish Sastry, Amanda
  Askell, Sandhini Agarwal, Ariel Herbert-Voss, Gretchen Krueger, Tom Henighan,
  Rewon Child, Aditya Ramesh, Daniel~M. Ziegler, Jeffrey Wu, Clemens Winter,
  Christopher Hesse, Mark Chen, Eric Sigler, Mateusz Litwin, Scott Gray,
  Benjamin Chess, Jack Clark, Christopher Berner, Sam McCandlish, Alec Radford,
  Ilya Sutskever, and Dario Amodei.
\newblock Language models are few-shot learners, 2020.

\bibitem{radford2021learning}
Alec Radford, Jong~Wook Kim, Chris Hallacy, Aditya Ramesh, Gabriel Goh,
  Sandhini Agarwal, Girish Sastry, Amanda Askell, Pamela Mishkin, Jack Clark,
  Gretchen Krueger, and Ilya Sutskever.
\newblock Learning transferable visual models from natural language
  supervision, 2021.

\bibitem{pmlr-v97-rahaman19a}
Nasim Rahaman, Aristide Baratin, Devansh Arpit, Felix Draxler, Min Lin, Fred
  Hamprecht, Yoshua Bengio, and Aaron Courville.
\newblock On the spectral bias of neural networks.
\newblock In Kamalika Chaudhuri and Ruslan Salakhutdinov, editors, {\em
  Proceedings of the 36th International Conference on Machine Learning},
  volume~97 of {\em Proceedings of Machine Learning Research}, pages
  5301--5310. PMLR, 09--15 Jun 2019.

\bibitem{basri2020frequency}
Ronen Basri, Meirav Galun, Amnon Geifman, David Jacobs, Yoni Kasten, and Shira
  Kritchman.
\newblock Frequency bias in neural networks for input of non-uniform density,
  2020.

\bibitem{Bietti2019OnTI}
Alberto Bietti and Julien Mairal.
\newblock On the inductive bias of neural tangent kernels.
\newblock In {\em NeurIPS}, 2019.

\bibitem{ulyanov2018deep}
Dmitry Ulyanov, Andrea Vedaldi, and Victor Lempitsky.
\newblock Deep image prior.
\newblock In {\em Proceedings of the IEEE conference on computer vision and
  pattern recognition}, pages 9446--9454, 2018.

\bibitem{10.1145/3386569.3392415}
Rana Hanocka, Gal Metzer, Raja Giryes, and Daniel Cohen-Or.
\newblock Point2mesh: A self-prior for deformable meshes.
\newblock {\em ACM Trans. Graph.}, 39(4), July 2020.

\bibitem{metzer2020self}
Gal Metzer, Rana Hanocka, Raja Giryes, and Daniel Cohen-Or.
\newblock Self-sampling for neural point cloud consolidation.
\newblock {\em arXiv preprint arXiv:2008.06471}, 2020.

\bibitem{Savarese19How}
Pedro Savarese, Itay Evron, Daniel Soudry, and Nathan Srebro.
\newblock How do infinite width bounded norm networks look in function space?
\newblock In {\em Conference on Learning Theory}, pages 2667--2690, 2019.

\bibitem{Williams19Gradient}
Francis Williams, Matthew Trager, Daniele Panozzo, Claudio Silva, Denis Zorin,
  and Joan Bruna.
\newblock Gradient dynamics of shallow univariate relu networks.
\newblock In {\em Advances in Neural Information Processing Systems 32}, pages
  8376--8385. 2019.

\bibitem{Ongie2020Function}
Greg Ongie, Rebecca Willett, Daniel Soudry, and Nathan Srebro.
\newblock A function space view of bounded norm infinite width {R}e{LU} nets:
  The multivariate case.
\newblock In {\em International Conference on Learning Representations}, 2020.

\bibitem{giryes2020function}
Raja Giryes.
\newblock A function space analysis of finite neural networks with insights
  from sampling theory, 2020.

\bibitem{Zhu20073DDU}
Yuanchen Zhu and S.~Gortler.
\newblock 3d deformation using moving least squares.
\newblock 2007.

\bibitem{liu2021editing}
Steven Liu, Xiuming Zhang, Zhoutong Zhang, Richard Zhang, Jun-Yan Zhu, and
  Bryan Russell.
\newblock Editing conditional radiance fields, 2021.

\bibitem{Mehr_2019_ICCV}
Eloi Mehr, Ariane Jourdan, Nicolas Thome, Matthieu Cord, and Vincent Guitteny.
\newblock Disconet: Shapes learning on disconnected manifolds for 3d editing.
\newblock In {\em Proceedings of the IEEE/CVF International Conference on
  Computer Vision (ICCV)}, October 2019.

\bibitem{uy-joint-cvpr21}
Mikaela~Angelina Uy, Vladimir~G. Kim, Minhyuk Sung, Noam Aigerman, Siddhartha
  Chaudhuri, and Leonidas Guibas.
\newblock Joint learning of 3d shape retrieval and deformation.
\newblock In {\em IEEE Conference on Computer Vision and Pattern Recognition
  (CVPR)}, 2021.

\bibitem{wei2020learning}
Fangyin Wei, Elena Sizikova, Avneesh Sud, Szymon Rusinkiewicz, and Thomas
  Funkhouser.
\newblock Learning to infer semantic parameters for 3d shape editing, 2020.

\bibitem{10.1145/15886.15903}
Thomas~W. Sederberg and Scott~R. Parry.
\newblock Free-form deformation of solid geometric models.
\newblock {\em SIGGRAPH Comput. Graph.}, 20(4):151–160, August 1986.

\bibitem{10.1145/1186822.1073229}
Tao Ju, Scott Schaefer, and Joe Warren.
\newblock Mean value coordinates for closed triangular meshes.
\newblock In {\em ACM SIGGRAPH 2005 Papers}, SIGGRAPH '05, page 561–566, New
  York, NY, USA, 2005. Association for Computing Machinery.

\bibitem{10.1145/1276377.1276466}
Pushkar Joshi, Mark Meyer, Tony DeRose, Brian Green, and Tom Sanocki.
\newblock Harmonic coordinates for character articulation.
\newblock {\em ACM Trans. Graph.}, 26(3):71–es, July 2007.

\bibitem{10.1145/1360612.1360677}
Yaron Lipman, David Levin, and Daniel Cohen-Or.
\newblock Green coordinates.
\newblock {\em ACM Trans. Graph.}, 27(3):1–10, August 2008.

\bibitem{10.1145/1141911.1141920}
Scott Schaefer, Travis McPhail, and Joe Warren.
\newblock Image deformation using moving least squares.
\newblock {\em ACM Trans. Graph.}, 25(3):533–540, July 2006.

\bibitem{hanocka2018alignet}
Rana Hanocka, Noa Fish, Zhenhua Wang, Raja Giryes, Shachar Fleishman, and
  Daniel Cohen-Or.
\newblock Alignet: partial-shape agnostic alignment via unsupervised learning.
\newblock {\em ACM Transactions on Graphics (TOG)}, 38(1):1, 2018.

\bibitem{Yifan:NeuralCage:2020}
Wang Yifan, Noam Aigerman, Vladimir~G. Kim, Siddhartha Chaudhuri, and Olga
  Sorkine-Hornung.
\newblock Neural cages for detail-preserving 3d deformations.
\newblock In {\em CVPR}, 2020.

\bibitem{wang20193dn}
Weiyue Wang, Duygu Ceylan, Radomir Mech, and Ulrich Neumann.
\newblock 3dn: 3d deformation network.
\newblock In {\em CVPR}, 2019.

\bibitem{groueix19cycleconsistentdeformation}
Thibault Groueix, Matthew Fisher, Vova Kim, Bryan Russell, and Mathieu Aubry.
\newblock Unsupervised cycle-consistent deformation for shape matching.
\newblock In {\em Symposium on Geometry Processing (SGP)}, 2019.

\bibitem{10.2312:SGP:SGP07:109-116}
Olga Sorkine and Marc Alexa.
\newblock {As-Rigid-As-Possible Surface Modeling}.
\newblock In Alexander Belyaev and Michael Garland, editors, {\em Geometry
  Processing}. The Eurographics Association, 2007.

\bibitem{jakab2021keypointdeformer}
Tomas Jakab, Richard Tucker, Ameesh Makadia, Jiajun Wu, Noah Snavely, and
  Angjoo Kanazawa.
\newblock Keypointdeformer: Unsupervised 3d keypoint discovery for shape
  control.
\newblock In {\em Proceedings of the IEEE/CVF Conference on Computer Vision and
  Pattern Recognition}, 2020.

\bibitem{8579082}
Victor Lempitsky, Andrea Vedaldi, and Dmitry Ulyanov.
\newblock Deep image prior.
\newblock In {\em 2018 IEEE/CVF Conference on Computer Vision and Pattern
  Recognition}, pages 9446--9454, 2018.

\bibitem{Wei2021DeepHS}
Xingkui Wei, Zhengqing Chen, Yanwei Fu, Zhaopeng Cui, and Yinda Zhang.
\newblock Deep hybrid self-prior for full 3d mesh generation.
\newblock {\em ArXiv}, abs/2108.08017, 2021.

\bibitem{shapenet2015}
Angel~X. Chang, Thomas Funkhouser, Leonidas Guibas, Pat Hanrahan, Qixing Huang,
  Zimo Li, Silvio Savarese, Manolis Savva, Shuran Song, Hao Su, Jianxiong Xiao,
  Li~Yi, and Fisher Yu.
\newblock {ShapeNet: An Information-Rich 3D Model Repository}.
\newblock Technical Report arXiv:1512.03012 [cs.GR], Stanford University ---
  Princeton University --- Toyota Technological Institute at Chicago, 2015.

\bibitem{kingma2017adam}
Diederik~P. Kingma and Jimmy Ba.
\newblock Adam: A method for stochastic optimization, 2017.

\end{thebibliography}

\end{document}